\numberwithin{equation}{section}
\newtheorem{theorem}{Theorem}[section]
\newtheorem{lemma}[theorem]{Lemma}
\def\tr{\mathop{\textnormal{tr}}}
\newcommand{\supp}{\text{\rm{supp}}}
\newcommand{\R}{\mathbb{R}}
\newcommand{\Z}{\mathbb{Z}}
\renewcommand{\P}{\mathbb{P}}
\newcommand{\E}{\mathbb{E}}
\newcommand{\Zd}{\mathbb{Z}^{d}}
\newcommand{\ZN}{\mathbb{Z}^{Nd}}
\renewcommand{\L}{\Lambda}
\newcommand{\norm}[1]{||\,#1\,||}
\newcommand{\abs}[1]{|\,#1\,|}
\def\<{\langle}
\def\>{\rangle}
\def\p0{\psi_0}                              % \p0 = psi_0
\def\1Ll{{1 \over {\vert \Lambda_L \vert}}}
\def\j1d{{(1+\vert j \vert )}^{-(d+2)}}
\def\x1a{{(1+ \vert x\vert )}^{-\alpha}}
\def\2j1a{{(1+ \vert j\vert )}^{-\alpha}}
\begin{document}

\title[A multi-particle Wegner estimate]
{A Wegner estimate\\[3mm] for multi-particle random Hamiltonians\\[5mm]}

\author[Werner Kirsch]{Werner Kirsch\vspace{0.5mm}\\  Institut für Mathematik\\ Ruhr-Universität Bochum}
%\thanks{$^*$ Also SFB--TR 12}
\address{Institut f\"ur Mathematik and SFB TR 12,
Ruhr-Universit\"at Bochum, \goodbreak D-44780 Bochum, Germany}
\email{werner.kirsch@ruhr-uni-bochum.de}

%\keywords{disordered systems, random Schr\"odinger operators,
%integrated density of states, Lifshitz tails}
%\vspace{5mm}

%\subjclass[2000]{82B44, 35J10, 35P20, 81Q10, 58J35}

\begin{abstract}
We  prove a Wegner estimate for a large class of multi-particle
Anderson Hamiltonians on the lattice. These estimates will allow us
to prove Anderson localization for such systems. A detailed proof of
localization will be given in a subsequent paper.
\end{abstract}

\maketitle

\section{Introduction\label{sec:intro}}

Wegner estimates originate in the famous paper \cite{Wegner}. There, Wegner proved among other things that the
integrated density of states for the Anderson Hamiltonian has a bounded density provided the probability distribution
of the random potential itself has a bounded density. This implies in particular an upper bound on the probability that
an Anderson Hamiltonian on a finite box has eigenvalues close to a given energy $E$.

Wegner's estimate play a key role in the Multiscale method to prove Anderson localization (see e.\;g. \cite{Froehlich} or
\cite{DK}). Only recently Bourgain and Kenig \cite{BourgKen} proved Anderson localization for a Bernoulli model without
an \emph{a priori} Wegner estimate; they prove a Wegner-type estimate inductively within the Multiscale scheme.

Wegner's original work was restricted to lattice models. However, the estimate was also proven for the continuum (see \cite{CHK}
for a recent rather optimal result and \cite{Veselic} for a review on this subject).

In this note we prove a Wegner estimate for a multi-particle Anderson model. In a subsequent paper we will also do multiscale analysis
for this model. The first Wegner estimate for a multi-particle random Hamiltonian was proved by Zenk \cite{Zenk}. Chulaevsky and
Suhov \cite{ChulaSuh} develop a multiscale analysis for certain (1-d) two body Hamiltonians. The Wegner estimate in this paper
requires strong conditions on the probability density of the random potential (e.\;g. analyticity). It was one of the motivations
of the present note to avoid these strong assumptions.

The method of proof applied here is close to Wegner's original idea and was developed from the paper \cite{Kirsch}. Note that
there is a refinement of this method by Stollmann \cite{Stollmann} which is likely to work in the multi-particle case as well.

We note that the method presented in this paper will also work for
alloy-type models in the continuous case. The necessary changes can
be read off from the paper \cite{Kirsch}. However, in the continuous
case we get the volume factor of the bound with an exponent 2. This
suffices to do a multiscale analysis, but it gives no result for the
regularity of the integrated density of states.

\section{Models and Results}

We will deal with a system of N interacting particles on a lattice $\Z^d$. We consider these particles on the full Hilbert space,
disregarding Fermionic or Bosonic symmetry. Physically speaking we deal with distinguishable particle. Since the full Hilbert
space is a direct sum of the irreducible subspaces with respect to $S_N$-symmetry (including the totally symmetric and the totally
antisymmetric subspaces) the Wegner estimates for Fermions and Bosons follow immediately from the result on the full space.

The one-particle Hilbert space we consider is $\ell^2(\mathbb{Z}^{d})$ and the
Hilbert space for $N$ (distinguishable) particles is consequently $\ell^2(\ZN)$. Any (bounded) operator $A$ on these Hilbert
spaces is uniquely defined through its matrix elements $A(x,y)\,=\,(\delta_x,A\,\delta_y)$ where $\delta_z$ is the vector
in $\ell^2$ with component $1$ at lattice site $z$ and $0$ otherwise.

We write the lattice site $x\in\ZN$ as $x=(x_1,\ldots,x_N)$, where $x_i\in\Zd$ denotes the the coordinates of the
$i^{th}$ particle.

Each particle (with coordinates $\xi$) is subject to a random potential $v_\omega(\xi)$ which is the same for all particles.
The random potential $v_\omega(\xi)$ consists of independent identically distributed random variables. Throughout we assume that
the distribution of the $v(\xi)$ has a bounded density $\rho(v)$. We denote the underlying probability measure by $\P$ and the
expectation with respect to $\P$ by $\E$.

The kinetic energy operator for one particle is given by:

\begin{equation}
h_0\,u(\xi)~=~\sum_{|n|=1,~n\in\Z^d}\;u(\xi+n)\qquad\qquad \xi\in\Z^d
\end{equation}

the single particle Hamiltonian is consequently:

\begin{equation}
h_\omega~=~h_0\;+\;v_\omega
\end{equation}

If $h$ is a one-particle operator acting in $\ell^2(\Z^d)$ we denote by $h^{(i)}$ the corresponding operator on $\ell^2(\Z^{Nd})$
acting on the $i^{th}$ particle only, more precisely: If $h$ has matrix elements $h(\xi,\eta)$ then:

\begin{equation}
h^{(i)}\,u(x_1,\ldots,x_n)~=~\sum_{\eta\in\Z^d}\;h(x_i,\eta)\,u(x_1,\ldots,x_{i-1},\eta,x_{i+1},\ldots,x_N)
\end{equation}

In other words,
\begin{equation}
h^{(i)}~=\underbrace{\mathbf{1}_{\ell^2(\Z^d)}\otimes \ldots \otimes\mathbf{1}_{\ell^2(\Z^d)}}_{i-1 \textnormal{ times}}
\otimes \;\,h\; \otimes \underbrace{\mathbf{1}_{\ell^2(\Z^d)}\otimes \ldots \otimes\mathbf{1}_{\ell^2(\Z^d)}}_{n-i-1 \textnormal{ times}}
\end{equation}

The N-particle Hamiltonian without interaction is defined by:

\begin{equation}
    H_{\omega,\,0}~=~\sum_{i=1}^N\;h^{(i)}
\end{equation}

The interaction term $U$ can be a rather general function on $\Z^{Nd}$. We assume it to be bounded for simplicity.
We also suppose that $U$ is a deterministic function, it would be sufficient for our purpose to have $U$ independent of $v_\omega$.
In most cases $U$ is a pair potential of the form $U(x)=\sum_{i\not=j}u(x_i-x_j)$.

The N-particle Hamiltonian with interaction $U$is then given by:

\begin{equation}
    H_{\omega,\,U}~=~H_{\omega,\,0}\;+\;U
\end{equation}

We will deal with this operator restricted to a bounded (hence finite) domain $\L$. The number of elements
of $\L$ will be denoted by $\abs{\L}$.

We call a subset $R$ of $\Z^d$ a rectangle if
\begin{equation}
R=\{\xi\in\Z^d\;|\;L_\nu\leq\xi_\nu\leq M_\nu~~\textnormal{for }\nu=1\ldots N\,\}
\end{equation}

A rectangular domain in $\Z^{Nd}$ is a set $\Lambda$ of the form:

\begin{equation}
    \Lambda~=~\Lambda_1\,\times\,\Lambda_2\times\ldots\times\Lambda_N
\end{equation}

where the $\L_i$ are rectangles in $\Z^d$. We use the notation $\Pi_i(\L)=\L_i$. We call a
rectangular domain $\L$ \emph{regular} if for all $i,j=1,\ldots N$ either $\L_i\cap \L_j=\emptyset$ or $\L_i= \L_j$.

For any subset $\L$ of $\ZN$ we define the operator $H^\L=H^\L_{\omega,\,U}$ by its matrix
elements:

\begin{equation}
    H^\L\,(x,y)~=~H_{\omega,\,U}(x,y)\hspace{1.5cm} \textnormal{for }x,y\in\L
\end{equation}

The main result of this note is the following Wegner estimate for multi-particle operators:

\begin{theorem}\label{th:Wegner} If $\L$ is a regular rectangular domain then
\begin{equation}\label{Wegner}
   \P\,\Big(\,\textnormal{dist}\left(\,\sigma\left(H^\L\right),\,E\,\right)\;<\kappa\;\Big)~~\leq~~C\;\norm{\rho}_\infty\;|\,\L\,|\;\kappa
\end{equation}
\end{theorem}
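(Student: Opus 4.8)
The plan is to follow the route of Wegner and Kirsch (as in \cite{Kirsch}): bound the probability by the expected number of eigenvalues of $H^\L$ in the window $(E-\kappa,E+\kappa]$, and exploit that raising \emph{all} of the single-site potential values by one and the same amount translates the spectrum of $H^\L$ rigidly. Writing $N_A(\lambda):=\tr\,\mathbf{1}_{(-\infty,\lambda]}(A)$ for a self-adjoint operator $A$ on $\ell^2(\L)$ (so $N_A\in\{0,1,\dots,|\L|\}$), the first step is the elementary reduction: since $H^\L$ is a real symmetric operator on the $|\L|$-dimensional space $\ell^2(\L)$, the event $\{\textnormal{dist}(\sigma(H^\L),E)<\kappa\}$ is contained in $\{\tr\,\mathbf{1}_{(E-\kappa,E+\kappa]}(H^\L)\ge1\}$, so by Chebyshev it is enough to bound $\E\big[N_{H^\L}(E+\kappa)-N_{H^\L}(E-\kappa)\big]$ by $C\norm{\rho}_\infty|\L|\kappa$.

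Next I would set up the rigid shift. Let $\Gamma:=\bigcup_{i=1}^{N}\Pi_i(\L)\subset\Zd$ be the (finite) set of all coordinates of points of $\L$; only the variables $(v_\omega(\xi))_{\xi\in\Gamma}$ enter $H^\L$. For $\xi\in\Gamma$ let $W_\xi$ be the diagonal operator on $\ell^2(\L)$ with $W_\xi(x,x)=\#\{i:x_i=\xi\}\ge0$; then replacing $v_\omega(\xi)$ by $v_\omega(\xi)+\tau$ changes $H^\L$ by $\tau W_\xi$, and crucially $\sum_{\xi\in\Gamma}W_\xi=N\cdot\mathbf{1}$, since every point of $\L$ has all $N$ of its coordinates in $\Gamma$. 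Hence raising every $v_\omega(\xi)$, $\xi\in\Gamma$, by $\tau_0:=2\kappa/N$ turns $H^\L$ into $H^\L+2\kappa\cdot\mathbf{1}$; denoting this shifted configuration by $\omega+\tau_0$, this gives $N_{H^\L(\omega)}(E-\kappa)=N_{H^\L(\omega+\tau_0)}(E+\kappa)$, so that $N_{H^\L(\omega)}(E+\kappa)-N_{H^\L(\omega)}(E-\kappa)=N_{H^\L(\omega)}(E+\kappa)-N_{H^\L(\omega+\tau_0)}(E+\kappa)$.

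I would then telescope this difference site by site so as to reduce it to single-site perturbations, which is what lets one use the density $\rho$. Enumerating $\Gamma=\{\eta_1,\dots,\eta_m\}$, let $\omega^{(j)}$ be the configuration in which $v_\omega(\eta_1),\dots,v_\omega(\eta_j)$ have each been raised by $\tau_0$; then $H^\L(\omega^{(j)})=H^\L(\omega^{(j-1)})+\tau_0 W_{\eta_j}$ and the difference above equals $\sum_{j=1}^{m}\big(N_{H^\L(\omega^{(j-1)})}(E+\kappa)-N_{H^\L(\omega^{(j)})}(E+\kappa)\big)$. Fixing $j$ and conditioning on $(v_\omega(\xi))_{\xi\ne\eta_j}$, one has $H^\L(\omega^{(j-1)})=C_j+v_\omega(\eta_j)W_{\eta_j}$ with $C_j$ measurable for this conditioning, so the $j$-th summand is $F_j(v_\omega(\eta_j))-F_j(v_\omega(\eta_j)+\tau_0)$, where $F_j(s):=N_{C_j+sW_{\eta_j}}(E+\kappa)$ is nonincreasing with values in $[0,|\L|]$. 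Because the conditional law of $v_\omega(\eta_j)$ has density $\rho$ (independence), Fubini against the nonnegative measure $-dF_j$ together with $\int_{(r-\tau_0,r]}\rho(s)\,ds\le\norm{\rho}_\infty\tau_0$ yields
\[
\int_{\R}\rho(s)\big(F_j(s)-F_j(s+\tau_0)\big)\,ds\ \le\ \norm{\rho}_\infty\,\tau_0\,\big(F_j(-\infty)-F_j(+\infty)\big),
\]
and since $W_{\eta_j}\ge0$ has rank $r_j:=\#\{x\in\L:\exists\,i,\ x_i=\eta_j\}$, writing $C_j+sW_{\eta_j}=(C_j-sW_{\eta_j})+2sW_{\eta_j}$ and invoking Weyl's perturbation inequality gives $F_j(-s)-F_j(s)\le r_j$ for all $s>0$, hence $F_j(-\infty)-F_j(+\infty)\le r_j$. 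Summing over $j$ and using the combinatorial identity $\sum_{\xi\in\Gamma}r_\xi=\sum_{x\in\L}\#\{\text{distinct coordinates of }x\}\le N|\L|$ then gives $\E\big[N_{H^\L}(E+\kappa)-N_{H^\L}(E-\kappa)\big]\le\norm{\rho}_\infty\,\tau_0\,N|\L|=2\norm{\rho}_\infty|\L|\kappa$, i.e.\ \eqref{Wegner} with $C=2$.

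The main obstacle is not any individual estimate but getting the volume factor to come out linear. Averaging the rigid shift directly against the joint density $\prod_{\xi\in\Gamma}\rho(v_\omega(\xi))$ would cost a factor $\norm{\rho}_\infty^{|\Gamma|}$; telescoping one site at a time removes that, but then the crude bound $F_j\le|\L|$ would introduce a spurious factor $|\Gamma|$, which for a product domain can be arbitrarily large relative to $N$. The point that makes the argument work is the replacement of $F_j\le|\L|$ by the rank bound $F_j(-\infty)-F_j(+\infty)\le r_j$ together with $\sum_{\xi}r_\xi\le N|\L|$; this is where the rectangular product structure of $\L$ enters, and keeping this balance is the step to watch. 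The full regularity hypothesis on $\L$ is in addition the natural class of boxes for the multiscale scheme announced in the abstract.
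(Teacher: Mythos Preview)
Your argument is correct and is a variant of the paper's Wegner--Kirsch strategy: bound the probability by the expected eigenvalue count, use that a simultaneous shift of the single-site couplings translates $H^\L$ rigidly, and then control the resulting single-site perturbations by a rank bound averaged against the density $\rho$. The implementations differ in two ways. First, the paper works with only one block $\L_1$ of the regular decomposition and the exact Feynman--Hellmann identity $\sum_{\xi\in\L_1}\partial E_n/\partial v(\xi)=K$ (this is precisely where regularity is used), whereas you sum over all of $\Gamma=\bigcup_i\L_i$ and use $\sum_{\xi\in\Gamma}W_\xi=N\cdot\mathbf{1}$; your version therefore never invokes regularity and in fact goes through for an arbitrary finite $\L\subset\Z^{Nd}$, confirming the paper's remark that the hypothesis can be dropped. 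Second, the paper smooths the counting function by an auxiliary increasing $\varphi$, differentiates via Feynman--Hellmann, and closes with the interlacing Lemma applied with the uniform rank bound $M=K\,|\L|/|\L_1|$; you instead telescope the rigid shift site by site, integrate the discrete counting function directly against $-dF_j$ by Fubini, and use the variable rank bound $F_j(-\infty)-F_j(+\infty)\le r_j$ together with $\sum_\xi r_\xi\le N|\L|$. This is a little more elementary and gives $C=2$, versus the $C=4$ that comes out of the paper's chain of inequalities. One small correction to your commentary: the step that balances the volume factor does not actually need the rectangular product structure of $\L$; what the product structure (indeed regularity) buys is only the paper's cleaner formulation via a single block $\L_1$.
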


The assumption of regularity of the set $\L$ can be avoided. However, the proof is more transparent with this assumption. The proof
of Anderson localization by multiscale analysis which we will present in a forthcoming paper will deal with regular domains only.

The proof of Theorem \ref{th:Wegner} implies also that the
integrated density of states has a bounded density. This result can also read off from the
explicitly known form of the integrated density of states (see \cite{klozen}).

\section{Proof}
We prove Theorem \ref{th:Wegner}. Let $\L=\L_1\times\L_2\dots\times\L_N$. We may assume that:
\begin{align}\label{L1} &\L_1=\L_2=\dots=\L_K\\ \textnormal{and\qquad} &\L_1\cap\L_i=\emptyset \quad
\textnormal{for all } i>K\label{L2}
\end{align}

We denote the eigenvalues of $H^\L$ by $E_n=E_n(H^\L)$. We order them so that $E_1\leq E_2\leq\dots$ and repeat any eigenvalue
according to its multiplicity. The eigenvalue counting function is denoted by:
\begin{equation}
N(H^\L,E)=\#\{E_n(H^\L)\leq E\}
\end{equation}

We will need the following Lemma:

\begin{lemma}\label{lem:de} Suppose (\ref{L1}) and (\ref{L2}) hold. Denote by $v(\xi)$ the value of the
random potential $v_\omega$ evaluated at the lattice site $\xi\in\Zd$. Then:
\begin{equation}\label{eq:de}
   \sum_{\xi\in\L_1} \frac{\partial E_n(H^\L)}{\partial v(\xi)}~=~K
\end{equation}

\end{lemma}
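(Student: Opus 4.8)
The plan is to apply first-order (Hellmann–Feynman) perturbation theory in the variables $v(\xi)$, $\xi\in\Zd$, and then exploit the multiplication structure of the random potential together with the hypotheses (\ref{L1})–(\ref{L2}). Fix a site $\xi\in\L_1$ and a normalized eigenfunction $\psi_n$ of $H^\L$ associated with the eigenvalue $E_n(H^\L)$. The only $v(\xi)$-dependent part of $H^\L$ is the potential term $\sum_{i=1}^N v_\omega^{(i)}$; differentiating $H^\L$ with respect to $v(\xi)$ produces $\sum_{i=1}^N P_\xi^{(i)}$, where $P_\xi^{(i)}$ is the projection onto configurations whose $i^{\text{th}}$ particle sits at $\xi$ (restricted to $\L$). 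Hence, by the Hellmann–Feynman theorem,
\begin{equation}
\frac{\partial E_n(H^\L)}{\partial v(\xi)}~=~\Big\langle \psi_n,\;\sum_{i=1}^N P_\xi^{(i)}\,\psi_n\Big\rangle~=~\sum_{i=1}^N \sum_{\substack{x\in\L\\ x_i=\xi}} |\psi_n(x)|^2 .
\end{equation}

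Next I would sum over $\xi\in\L_1$ and interchange the (finite) sums. For a fixed particle index $i$, summing over $\xi\in\L_1$ of $\sum_{x\in\L,\,x_i=\xi}|\psi_n(x)|^2$ gives $\sum_{x\in\L,\,x_i\in\L_1}|\psi_n(x)|^2$. Now the regularity/structure assumptions enter decisively. By (\ref{L1}) we have $\L_i=\L_1$ for every $i\le K$, so for such $i$ the constraint $x_i\in\L_1$ is automatically satisfied on $\L$, and the sum equals $\sum_{x\in\L}|\psi_n(x)|^2 = \|\psi_n\|^2 = 1$. By (\ref{L2}), for $i>K$ we have $\L_i\cap\L_1=\emptyset$, so no $x\in\L$ has $x_i\in\L_1$ and the contribution is $0$. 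Adding up over $i$ yields exactly $K\cdot 1 + (N-K)\cdot 0 = K$, which is (\ref{eq:de}).

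The only genuine subtlety is the validity of the Hellmann–Feynman step, i.e. the differentiability of $E_n(H^\L)$ in $v(\xi)$. Since $H^\L$ is a finite self-adjoint matrix depending linearly (hence real-analytically) on the parameters $v(\xi)$, its eigenvalues are, after the chosen ordering, continuous and piecewise analytic; at parameter values where $E_n$ is simple the derivative exists and the formula above holds verbatim, while at crossings one works with the analytic branches (Rellich's theorem) or interprets $\partial E_n/\partial v(\xi)$ as a one-sided derivative along such a branch — in either case the eigenprojection identity used above is unaffected because $\sum_i P_\xi^{(i)}$ is independent of which eigenfunction within a degenerate eigenspace is chosen, provided one sums the full multiplicity. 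I would simply remark that (\ref{eq:de}) is understood in this sense, which is exactly the sense in which it is used in the Wegner argument to follow. No other step presents any difficulty; the heart of the lemma is the bookkeeping with (\ref{L1})–(\ref{L2}) that converts "$K$ copies of the same box" into the clean constant $K$.
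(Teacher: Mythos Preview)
Your proof is correct and follows essentially the same route as the paper: Hellmann--Feynman to express $\partial E_n/\partial v(\xi)$ as $\langle\psi_n,\partial V/\partial v(\xi)\,\psi_n\rangle$, then the bookkeeping with (\ref{L1})--(\ref{L2}) to see that $\sum_{\xi\in\L_1}\partial V/\partial v(\xi)$ is the constant multiplication operator by $K$ on $\L$, whence normalization of $\psi_n$ gives the result. Your added remark on eigenvalue differentiability at crossings is more careful than the paper, which simply invokes Feynman--Hellman without comment.
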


\begin{proof}
Set $V(x)=\sum_{i=1}^N\,v(x_i)$ Then for $\xi\in\L_1$:
\begin{equation}
\frac{\partial V}{\partial v(\xi)}\,(x_1,\dots,x_N)~=~\sum_{i=1}^K\;\delta_{\xi\,x_i}
\end{equation}
Hence for each $(x_1,\dots,x_N)\in\L$ we have
\begin{equation}\label{eq:sdv}
\sum_{\xi\in\L_1}\;\frac{\partial V}{\partial v(\xi)}\,(x_1,\dots,x_N)~
=~\sum_{\xi\in\L_1}\;\sum_{i=1}^K\;\delta_{\xi\,x_i}~=~K
\end{equation}

Let us denote by $\psi_n$ the normalized eigenfunction of $H^\L$ for the eigenvalue $E_n=E_n(H^\L)$.
The Feynman-Hellman-Theorem tells us that:
\begin{align}
\frac{\partial E_n}{\partial v(\xi)}~=~\langle\,\psi_n,\,\frac{\partial V}{\partial v(\xi)}\,\psi_n\,\rangle~
=~\sum_{x\in\L}\;|\,\psi_n(x)\,|^2\;\frac{\partial V}{\partial v(\xi)}
\end{align}
Thus from (\ref{eq:sdv}) we obtain:
\begin{align}
\sum_{\xi\in\L_1}\,\frac{\partial E_n}{\partial v(\xi)}~
&=~\sum_{x\in\L}\;|\,\psi_n(x)\,|^2\;\big(\sum_{\xi\in\L_1}\,\frac{\partial V(x)}{\partial v(\xi)}\,\big)\\
&=~K\;\sum_{x\in\L}\,|\,\psi_n(x)\,|^2~=K
\end{align}
since $\psi_n$ is normalized.
\end{proof}
Let $\varphi$ be an increasing $C^\infty-$function on $\R$, $0\leq \varphi \leq 1$ with $\varphi=1$ on $(\kappa,\infty)$
and $\varphi=0$ on $(-\infty,-\kappa)$.

Then:
\begin{align}
&\P\,\Big(\,\textnormal{dist}\left(\,\sigma\left(H^\L\right),\,E\,\right)\;<\kappa\;\Big)\\
\leq~ &\E\,\Big(\,N(H^\L,E+\kappa)\,-\,N(H^\L,E-\kappa)\,\Big)\\
=~ &\E\,\Big(\,\tr\,\big(\chi_{(E-\kappa,E+\kappa]}(H^\L)\, \big)\,\Big)\\
\leq~ &\E\,\bigg(\,\tr\,\Big(\,\varphi(H^\L\,-E\,+2\kappa)\,- \varphi\,(H^\L\,-E\,-2\kappa\,\Big)\,\bigg)\\
\leq~ &\E\,\bigg(\,\int_{-2\kappa}^{2\kappa}\;\tr\,\Big(\,\varphi'(H^\L\,-E\,+t)\,\Big)\,dt\,\bigg)\\
\intertext{by Lemma \ref{lem:de}:}
\leq~ &\frac{1}{K}\;\sum_n\, \int_{-2\kappa}^{2\kappa}\;\E\,\bigg(\,
\varphi'\Big(E_n(H^\L)\,-E\,+t\Big)\;\sum_{\xi\in\L_1}\,\frac{\partial E_n(H^\L)}{\partial v(\xi)}\,\bigg)\,dt\\
\leq~ &\sum_n\, \int_{-2\kappa}^{2\kappa}\;\sum_{\xi\in\L_1}\;\E\,\bigg(\,
\frac{\partial\,\varphi\Big(E_n(H^\L)\,-E\,+t\Big)}{\partial v(\xi)}\; \bigg)\,dt\label{est:W}
\end{align}
Since $\E$ is a product measure we can split it into an integration over $v(\xi)$ which we write as $\int\,\cdot\;\rho(v)\,dv$
and the expectation with respect to the other random variables, which expectation we denote as $\E_{v(\xi)}^-$ .

With this notation (\ref{est:W}) equals:
\begin{align}
&\sum_n\,\int_{-2\kappa}^{2\kappa}\,dt\;\sum_{\xi\in\L_1}\;\E_{v(\xi)}^-\,\Big(\,\int
\frac{\partial\varphi(E_m(H^\L-E+t)}{\partial v(\xi)}\;\rho\big(v(\xi)\big)\;dv(\xi)\,\Big)\\
&\leq~\norm{\rho}\;\sum_n\,\int_{-2\kappa}^{2\kappa}\,dt\;\sum_{\xi\in\L_1}\;\E_{v(\xi)}^-\,\Big(\,\int
\frac{\partial\varphi(E_m(H^\L-E+t)}{\partial v(\xi)}\;dv(\xi)\,\Big)\label{est:fin}
\end{align}

By the fundamental theorem of calculus we have:

\begin{align}
&\int\,\frac{\partial\varphi\,\big(E_n(H^\L)-E+t\,\big)}{\partial v(\xi)}\;dv(\xi)\\
~=~~&\varphi\big(E_n(H^\L_{v(\xi)=\max})-E+t\big)\,
-\,\varphi\big(E_n(H^\L_{v(\xi)=\min})-E+t\big)
\end{align}

\noindent where $H^\L_{v(\xi)=\max}$ (resp. $H^\L_{v(\xi)=\min}$) denotes the operator $H^\L$ with the potential $v(\xi)$ set to its
maximal (resp. minimal) value, i.~e. with $v(\xi)=\sup\,(\supp(\rho))$ or $v(\xi)=\inf\,(\supp(\rho))$. Note that we include
the cases $\sup\,(\supp(\rho))=\infty$ and $\inf\,(\supp(\rho))=-\infty$.

Changing $v(\xi)$ from its minimal to its maximal value is a (positive) perturbation of rank at most $M=K\frac{|\L|}{|\L_1|}$. Thus:

\begin{equation}
E_n(H^\L_{v(\xi)=min})~\leq~E_n(H^\L_{v(\xi)=max})~\leq~E_{n+M}(H^\L_{v(\xi)=min})
\end{equation}

To estimate (\ref{est:fin}) we use the following simple Lemma:

\begin{lemma}\label{lem:simple}
Let $\varphi$ be a non decreasing function on $\R$ with $0\leq\varphi\leq 1$.
If $a_n$ and $b_n$ are non decreasing sequences satisfying $a_n\leq b_n\leq a_{n+M}$ for all $n$, then:
\begin{align}
\sum_n\;\big(\;\varphi(b_n)-\varphi(a_n)\;\big)~\leq~M
\end{align}
\end{lemma}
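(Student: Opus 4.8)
The plan is a pure telescoping estimate; no spectral input beyond what has already appeared is needed. First I would record the elementary term-by-term comparison: since $\varphi$ is non-decreasing and the hypothesis gives $a_n\le b_n\le a_{n+M}$, we have $0\le\varphi(b_n)-\varphi(a_n)\le\varphi(a_{n+M})-\varphi(a_n)$ for every $n$. This reduces the claim to the single inequality $\sum_n\big(\varphi(a_{n+M})-\varphi(a_n)\big)\le M$.

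Next I would decompose the index set (say $\N$, or $\{1,\dots,\mathcal{N}\}$ if the sequences are finite) into the $M$ arithmetic progressions $P_r=\{\,r+jM:j\ge 0\,\}$, $r=1,\dots,M$. On each $P_r$ the sum $\sum_{j\ge 0}\big(\varphi(a_{r+(j+1)M})-\varphi(a_{r+jM})\big)$ telescopes; since $\varphi$ is non-decreasing and bounded by $1$, and $(a_n)$ is non-decreasing, the partial sums are non-negative, non-decreasing, and bounded above by $\sup\varphi-\varphi(a_r)\le 1$, so each such series converges with value at most $1$. All terms being non-negative, I may reassemble the full sum as the sum over $r=1,\dots,M$ of these $M$ telescoping series, obtaining the bound $M$; combined with the first step this is exactly the assertion.

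The only point needing a word of care is the bookkeeping at the ``end'' of the sequences. In the application $a_n=E_n(H^\L_{v(\xi)=\min})$ and $b_n=E_n(H^\L_{v(\xi)=\max})$ range over the finitely many ($|\L|$) eigenvalues of finite matrices, and $b$ is obtained from $a$ by a non-negative perturbation of rank $\le M$, which by the min-max principle gives precisely $a_n\le b_n\le a_{n+M}$ for indices in range. To apply the lemma uniformly I would extend both sequences by setting $a_n:=+\infty$ (equivalently $\varphi(a_n):=\sup\varphi$) once $n$ exceeds the common length; this preserves monotonicity and the hypothesis, and turns each telescoping series into a finite sum bounded by $1$ as before. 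I do not anticipate a genuine obstacle here: the whole content is the monotonicity reduction in the first step together with the partition-into-residue-classes telescoping in the second.
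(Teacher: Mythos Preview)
Your argument is correct: the monotonicity bound $\varphi(b_n)-\varphi(a_n)\le\varphi(a_{n+M})-\varphi(a_n)$ followed by the partition into $M$ residue classes and telescoping is exactly the intended elementary proof, and your handling of the finite-index endpoint is fine. The paper itself does not supply a proof of this lemma---it is stated as a ``simple Lemma'' and used directly---so there is nothing further to compare.
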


Combining the above estimates we get:

\begin{align}
&\P\,\Big(\,\textnormal{dist}\left(\,\sigma\left(H^\L\right),\,E\,\right)\;<\kappa\;\Big)\\
\leq~&\norm{\rho}\;\sum_n\,\int_{-2\kappa}^{2\kappa}\,dt\;\sum_{\xi\in\L_1}\;\E_{v(\xi)}^-\,\Big(\,\int
\frac{\partial\varphi(E_m(H^\L-E+t)}{\partial v(\xi)}\;dv(\xi)\,\Big)\\
\leq~&\norm{\rho}\;\int_{-2\kappa}^{2\kappa}\,dt\;\sum_{\xi\in\L_1}\;\E_{v(\xi)}^-\,\sum_n\,
\Big(\varphi(E_n(H^\L_{v(\xi)=max})-\varphi(E_n(H^\L_{v(\xi)=min})\,\Big)\notag\\
\leq~&\norm{\rho}\;4\,\kappa\;|\L|
\end{align}

\end{document}